\theoremstyle{plain}
\newtheorem{theorem}{Theorem}[section]
\newtheorem{proposition}[theorem]{Proposition}
\newtheorem{lemma}[theorem]{Lemma}
\newtheorem{corollary}[theorem]{Corollary}
\theoremstyle{definition}
\newtheorem{definition}[theorem]{Definition}
\newtheorem{example}[theorem]{Example}
\theoremstyle{remark}
\newtheorem{remark}{Remark}
\newtheorem{principle}[theorem]{Principle}
\newcommand{\GF}{\mathbb{F}}
\newcommand{\ZZ}{\mathbb{Z}}
\newcommand{\NN}{\mathbb{N}}
\newcommand{\PCG}{\ensuremath{\mathrm{PCG}}\xspace}
\newcommand{\polyMuM}{\ensuremath{\textit{poly}\text{--}\mathrm{MuM}}\xspace}
\newcommand{\phiMuM}{\ensuremath{\phi\text{--}\mathrm{MuM}}\xspace}
\newcommand{\SG}{\ensuremath{\operatorname{SG}}\xspace}
\newcommand{\mex}{\operatorname{mex}}
\newcommand{\Opt}{\operatorname{Opt}}
\newcommand{\ord}{\operatorname{ord}}
\newcommand{\GFtwoeight}{\ensuremath{\GF_{2^{\!8}}}\xspace} 
\newcommand{\GFtwoeightstar}{\ensuremath{\GF_{2^{\!8}}^\times}\xspace}
\newcommand{\GFq}{\ensuremath{\GF_q}\xspace}
\newcommand{\GFqstar}{\ensuremath{\GF_q^\times}\xspace}
\title{Product--Congruence Games: A Unified Impartial-Game Framework for RSA (\phiMuM) and AES (\polyMuM)}
\author{Satyam Tyagi\thanks{Email: \href{mailto:satyam.tyagi@gmail.com}{satyam.tyagi@gmail.com}, ORCID: \href{https://orcid.org/0009-0001-0137-2338}{0009-0001-0137-2338}}}
\affil{Independent Researcher}
\date{\today}
\begin{document}
\maketitle

\begin{abstract}
  \textbf{RSA exponent reduction and AES S-box inversion share a hidden commonality: both are governed by the same impartial combinatorial principle, which we call a \emph{Product--Congruence Game}~(\PCG).} A \PCG tracks play via the modular (or finite-field) product of heap values, providing a single invariant that unifies the algebraic cores of these two ubiquitous symmetric and asymmetric cryptosystems.
  
  We instantiate this framework with two companion games. First, \textbf{\phiMuM}, where a left-associated ``multi-secret'' RSA exponent chain compresses via our \emph{Compression Theorem} into the game of Multiplicative Modular Nim, \(\PCG(k,\{1\})\), with \(k=\ord_N(g)\). The losing predicate of this game can then be factorized using the Chinese Remainder Theorem, mirroring RSA's implementation. Second, \textbf{\polyMuM}, our model for finite-field inversion in systems like AES. For \polyMuM, we prove the \emph{single-hole property} holds within its Threshold Region, which in turn implies that the Sprague-Grundy (\SG) values are multiplicative for disjunctive sums in that region.
  
  Beyond these concrete instances, we establish four structural theorems for general Product--Congruence Games \(\PCG(m,R)\): (i) single-heap repair above the modulus, (ii) ultimate period \(m\) per coordinate, (iii) exact and asymptotic losing densities, and (iv) confinement of optimal play to a finite \emph{Indeterminacy Region}. An \emph{operation-alignment} collapse principle finally explains why some variants degenerate to a single aggregate while MuM,\phiMuM and \polyMuM retain non-trivial structure.
  
  All ingredients (orders, CRT, finite fields) are classical; the novelty lies in this unified aggregation-compression viewpoint that embeds RSA and AES inside one impartial-game framework, together with the structural and collapse theorems for the resulting games.
\end{abstract}

\bigskip
\noindent\textbf{Keywords:} Product--Congruence Game, Multiplicative Modular Nim, RSA, AES S-box, Threshold Region, Indeterminacy Region, Sprague--Grundy multiplicativity, operation alignment, aggregation--compression

\bigskip
\noindent\textbf{MSC Classification (2020):} 91A46 (primary), 11A07, 94A60, 05A99


\section{Introduction}\label{sec:intro}

Our \textbf{main contribution} is to show that the algebraic cores of
\textbf{RSA} (exponent reduction via Carmichael and the Chinese Remainder
Theorem) and \textbf{AES} (finite-field inversion in \GFtwoeight) can be analysed
inside the \emph{same class of impartial combinatorial games}.  The key is a
single multiplicative \emph{product-congruence} invariant.  A left-associated
``multi-secret RSA exponent chain'' compresses exactly to Multiplicative Modular
Nim (MuM); its finite-field analogue, \polyMuM, models the AES
S-box inversion.

Impartial games often admit succinct aggregating invariants (Nim's bitwise XOR
\citep{Bouton1901,SpragueGrundy}).  Our earlier work introduced MuM, replacing
XOR by a modular product \citep{Tyagi2025}.  Here we show that the \emph{same}
multiplicative invariant explains both RSA's exponent reduction and AES's field
inversion when they are cast as \emph{product-congruence games}.

\paragraph{From exponent chain to product congruence (Compression Theorem).}
Fix an RSA modulus $N$ and $g\in(\ZZ/N\ZZ)^\times$.  A position
$\mathbf h=(h_1,\dots,h_n)$ evaluates to
\[
  E(\mathbf h)=(((g^{h_1})^{h_2})^{h_3}\cdots)^{h_n}\pmod N,
\]
and is losing iff $E(\mathbf h)=g$.  Left-association gives
$E(\mathbf h)=g^{H}$ with $H=\prod_i h_i$.  If $k=\ord_N(g)$ then
\[
  E(\mathbf h)\equiv g
  \;\Longleftrightarrow\;
  H\equiv 1 \pmod{k},
\]
so the chain game is isomorphic to $\PCG(k,\{1\})$ (MuM at modulus $k$).  With
$k=\prod p_j^{\alpha_j}$ the losing predicate CRT-factorizes, mirroring RSA
implementations.

\paragraph{A unified framework: Product-Congruence Games.}
The compressed form motivates \textbf{product-congruence games}
$\PCG(m,R)$: positions are integer vectors, a move strictly decreases one heap,
and a position is losing when $\prod t_i\bmod m\in R$.  The chain game is
$\PCG(k,\{1\})$; classical MuM is $\PCG(m,\{1\})$.  The finite-field instance
\polyMuM plays the same game over \GFtwoeightstar, matching the AES
S-box inversion step.

\paragraph{Structural localisation and the Indeterminacy Region.}
For every $\PCG(m,R)$ we prove:
(i) \textbf{single-heap repair} - if the residue is not in $R$ and some heap
$\ge m$, one move reaches $R$;
(ii) \textbf{ultimate period~$m$} in each coordinate (numeric model);
(iii) confinement of all residual strategy to the finite
\emph{Indeterminacy Region} $I_m=[1,m-1]^n$ (field analogue: $I_q$ with
$q-1$).  The proof uses an \emph{aggregation-compression} normalisation: aggregate
the invariant, compress to one canonical heap, then perform the single legal
repair move.

\paragraph{Collapse boundaries (operation alignment).}
If the local move operation aligns with the invariant (e.g.\ subtraction for a
sum-mod game, or divisor moves for a product-mod game) the multi-heap game
collapses outside the finite region to a trivial single aggregate.  MuM and
\polyMuM avoid collapse because their moves are additive while the
invariant is multiplicative.

\subsection*{Key Contributions}
\begin{enumerate}
  \item \textbf{Unified RSA/AES game lens.}  RSA's exponent reduction and the
        AES S-box inversion live in the \emph{same} impartial-game class driven
        by a product-congruence invariant (MuM / \polyMuM).
  \item \textbf{Compression Theorem.}  A multi-secret RSA exponent chain is
        isomorphic to $\PCG(k,\{1\})$ with $k=\ord_N(g)$; the losing predicate
        factorizes via CRT.
  \item \textbf{Structural theorems for $\PCG(m,R)$.}  \emph{SG
        multiplicativity in the Threshold Region~$T_\bullet$} (numeric and
        field models) via the single-hole property; single-heap repair;
        ultimate period~$m$ per coordinate (numeric); exact/asymptotic losing
        densities; all residual complexity confined to the finite
        $I_\bullet$.
  \item \textbf{Collapse boundaries.}  An operation-alignment principle
        explains why aligned variants collapse while MuM and
        \polyMuM retain strategic depth.
\end{enumerate}

\subsection*{Related Work}
Orders, Carmichael's $\lambda$, CRT \citep{rsa78,Carmichael}, AES field
inversion \citep{fips197}, and impartial game theory
\citep{SpragueGrundy} are classical.  Our earlier paper introduced MuM
and~\polyMuM \citep{Tyagi2025}.  Strategic game-theoretic models of
cryptography do exist—see Katz's survey on rational protocols
\citep{KatzGameCrypto2008}—but they analyse \emph{mixed-strategy}
equilibria for secure-multiparty computation and rely on probabilistic
moves, rather than on the \emph{perfect-information impartial games}
used here.  We are not aware of prior work that places RSA and AES
under a single impartial-game invariant or proves the accompanying
compression and collapse theorems.

\paragraph{Organisation.}
\begin{itemize}[leftmargin=2em,itemsep=0pt]
  \item \cref{sec:chain} proves the Compression Theorem.
  \item \cref{sec:pcg} formalises \(\PCG(m,R)\) and its structure.
  \item \cref{sec:collapse} gives collapse boundaries.
  \item \cref{sec:crypto} develops the unified RSA (\phiMuM) / AES (\polyMuM) perspective.
  \item \cref{sec:conclusion} concludes with future directions.
\end{itemize}

\section{The Exponent Chain Game and its Compression}\label{sec:chain}

This section isolates the RSA lens (the ``exponent-chain'' game) and shows it
\emph{compresses} exactly to MuM at modulus \(k=\ord_N(g)\).

\subsection{Game definition}
Fix an RSA modulus \(N\ge 2\) and \(g\in(\ZZ/N\ZZ)^\times\) of order
\(k=\ord_N(g)\) (so \(k\mid\lambda(N)\)).  The exponent-chain
(\emph{``\phiMuM''}) game is:

\begin{description}
\item[Positions] \(\mathbf h=(h_1,\dots,h_n)\in\NN^n\) with each \(h_i\ge 1\).
\item[Evaluation]
  \[
    E(\mathbf h)=(((g^{h_1})^{h_2})^{h_3}\cdots)^{h_n}\pmod N.
  \]
\item[Losing predicate] \(\mathbf h\) is losing iff \(E(\mathbf h)\equiv g\pmod N\).
\item[Moves] Pick an index \(i\) with \(h_i>1\) and replace \(h_i\) by some
  \(h_i'<h_i\).  \emph{Once \(h_i\ge k\) we restrict the decrement to
  \(1\le h_i-h_i'\le k-1\)} to forbid residue-preserving “null moves.”
\end{description}

\paragraph{Flattening.}
Because \((g^a)^b=g^{ab}\) we have
\[
  E(\mathbf h)\equiv g^{H(\mathbf h)}
  \quad\text{with}\quad
  H(\mathbf h):=\prod_{i=1}^n h_i.
\]

\subsection{Compression and CRT factorisation}

\begin{theorem}[Compression Theorem]\label{thm:compression}
\(E(\mathbf h)\equiv g\pmod N\) \emph{iff} \(H(\mathbf h)\equiv 1\pmod{k}\).
\end{theorem}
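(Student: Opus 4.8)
The plan is to chain together the flattening identity already recorded above with the defining divisibility property of the multiplicative order, obtaining the claim as a short sequence of equivalences. First I would invoke the identity $E(\mathbf h)\equiv g^{H(\mathbf h)}\pmod N$, which the Flattening paragraph establishes by repeatedly applying $(g^a)^b=g^{ab}$ inside the abelian group $\units{N}$. Under this identity the losing predicate $E(\mathbf h)\equiv g$ becomes the scalar congruence $g^{H(\mathbf h)}\equiv g\pmod N$, so the whole statement reduces to analysing a single power of $g$.

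Next I would cancel the base. Since $g\in\units{N}$ is a unit it is cancellable, so $g^{H}\equiv g\pmod N$ is equivalent to $g^{H-1}\equiv 1\pmod N$. Here $H(\mathbf h)=\prod_i h_i\ge 1$ because every heap satisfies $h_i\ge 1$, so the exponent $H-1$ is a nonnegative integer and the manipulation stays within $\NN$; no fractional or negative powers are introduced. The final step is the standard characterisation of the order: for $g$ of order $k=\ord_N(g)$ one has $g^e\equiv 1\pmod N$ if and only if $k\mid e$. Applying this with $e=H-1$ converts $g^{H-1}\equiv 1$ into $k\mid(H-1)$, that is $H\equiv 1\pmod k$, which is precisely the asserted equivalence. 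Because every link in the chain is itself an ``iff,'' both directions are proved simultaneously.

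I do not expect a genuine obstacle here: the argument is a direct chain of equivalences resting on two classical facts, namely associativity of iterated exponentiation in a commutative group and the divisibility characterisation of the multiplicative order. The only points demanding minor care are confirming that the invertibility of $g$ legitimises the cancellation step and that $H\ge 1$ keeps the exponent $H-1$ nonnegative so the power $g^{H-1}$ is well defined; both are immediate from the standing hypotheses $g\in\units{N}$ and $h_i\ge 1$.
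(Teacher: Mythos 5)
Your proposal is correct and follows essentially the same route as the paper's own proof: flatten to $g^{H(\mathbf h)}$, cancel the base to get $g^{H-1}\equiv 1$, and apply the divisibility characterisation of $\ord_N(g)$. Your extra remarks about $g$ being a unit and $H-1\ge 0$ are sensible bookkeeping that the paper leaves implicit.
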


\begin{proof}
  Flattening the exponent chain gives
  \(E(\mathbf h) \equiv g^{H(\mathbf h)} \pmod N\).
  By the definition of multiplicative order \(k=\ord_N(g)\),
  \begin{align*}
      E(\mathbf h) \equiv g \pmod N &\iff g^{H(\mathbf h)} \equiv g^1 \pmod N \\
                                    &\iff g^{H(\mathbf h)-1} \equiv 1 \pmod N \\
                                    &\iff k \mid \bigl(H(\mathbf h)-1\bigr) \\
                                    &\iff H(\mathbf h) \equiv 1 \pmod k. \qedhere
  \end{align*}
\end{proof}

\begin{corollary}[CRT factorisation]\label{cor:crt}
If \(k=\prod_{j=1}^r p_j^{\alpha_j}\), then \(\mathbf h\) is losing iff
\(H(\mathbf h)\equiv 1\pmod{p_j^{\alpha_j}}\) for all \(j\).
\end{corollary}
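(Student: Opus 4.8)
The plan is to derive this directly from the Compression Theorem together with the Chinese Remainder Theorem, treating it as a pure number-theoretic reformulation of the losing predicate. First I would invoke \cref{thm:compression} to replace the game-theoretic condition ``$\mathbf h$ is losing'' by the arithmetic condition $H(\mathbf h)\equiv 1\pmod k$, so that nothing about the evaluation $E$, the modulus $N$, or the generator $g$ remains; the problem is now entirely about a single congruence modulo $k$.

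Next I would split the modulus. Since $k=\prod_{j=1}^r p_j^{\alpha_j}$ is the prime-power factorisation, the factors $p_j^{\alpha_j}$ are pairwise coprime and their product is exactly $k$. The relevant fact from CRT is that for any integers $a,b$ one has $a\equiv b\pmod k$ if and only if $a\equiv b\pmod{p_j^{\alpha_j}}$ for every $j$; equivalently, $k\mid(a-b)$ iff each $p_j^{\alpha_j}\mid(a-b)$, which holds because the least common multiple of pairwise coprime integers equals their product. Applying this with $a=H(\mathbf h)$ and $b=1$ converts the single congruence $H(\mathbf h)\equiv 1\pmod k$ into the system $H(\mathbf h)\equiv 1\pmod{p_j^{\alpha_j}}$ for all $j$, which is the claimed equivalence.

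There is essentially no hard part here: the corollary is a one-line consequence once \cref{thm:compression} is in hand, and the only point demanding a moment of care is justifying the componentwise reformulation of the congruence. I would make explicit that it rests on pairwise coprimality of the prime-power factors, so that divisibility by each factor forces divisibility by their product, rather than appealing to the full ring isomorphism $\ZZ/k\ZZ\cong\prod_j \ZZ/p_j^{\alpha_j}\ZZ$. Phrasing it as an ``iff'' on divisibility keeps the argument self-contained and invokes no more machinery than the result actually requires.
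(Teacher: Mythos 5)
Your proof is correct and matches the paper's (implicit) argument exactly: apply \cref{thm:compression} to reduce losing to $H(\mathbf h)\equiv 1\pmod k$, then split the congruence over the pairwise coprime prime-power factors of $k$. The extra care you take in phrasing the CRT step as an equivalence of divisibility conditions is a fine, self-contained way to justify what the paper leaves unstated.
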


\begin{remark}[Strategic isomorphism to MuM]
\cref{thm:compression} shows the chain game is strategically isomorphic to
\(\PCG(k,\{1\})\) (MuM at modulus \(k\)): losing means
“\(\prod h_i\equiv1\pmod{k}\).”  Bounding each decrement by \(<k\) is
\emph{exactly} the MuM move rule and prevents losing\(\to\)losing null moves.
\end{remark}

\begin{example}
Let \(N=15\) and \(g=2\).  Then \(k=\ord_{15}(2)=4\).
A position \((h_1,h_2)\) is losing iff \(h_1h_2\equiv1\pmod4\); e.g.\
\((1,1)\), \((3,3)\), \((5,1)\).
\end{example}

The \emph{structure} of the compressed game (single-heap repair, Threshold vs.\
Indeterminacy, SG behaviour, densities, periodicity) is developed once and for
all in the next section, inside the general PCG framework that also covers the
AES/\polyMuM case.

\section{Product--Congruence Games (PCG): the unified framework}
\label{sec:pcg}

We now introduce the game class that simultaneously models

\begin{itemize}[leftmargin=2em,itemsep=0pt]
  \item classical MuM (\(\PCG(m,\{1\})\));
  \item the compressed RSA exponent-chain game (``\phiMuM'');
  \item the AES field variant (\polyMuM).
\end{itemize}

\subsection{Definition (ring / field form)}
\label{subsec:pcg-def}

\begin{definition}[PCG over a finite ring]\label{def:pcg-ring}
Let \(R\) be a finite commutative ring with unit group \(R^\times\).
Let \((S,<)\) be a well-ordered set of heap labels
(often \(S\subset\NN\)), and let
\(\psi:S\to R^\times\) be an \emph{injective} (canonical) encoding.
Fix a non-empty losing set \(L\subseteq R^\times\).

\begin{description}
  \item[Position] \(\mathbf h=(h_1,\dots,h_n)\in S^n\).
  \item[Invariant] \(\displaystyle
        \phi(\mathbf h):=\prod_{i=1}^n\psi(h_i)\in R^\times.\)
  \item[Move] Choose \(h_j\) and replace it by
        \(h_j'<h_j\) (so \(\psi(h_j')\neq\psi(h_j)\)).
  \item[Losing] \(\mathbf h\) is losing iff \(\phi(\mathbf h)\in L\).
\end{description}
We denote the game by \(\PCG(R,S,\psi,L)\).
\end{definition}

\paragraph{Two cryptographic lenses (our focus).}
\begin{itemize}[leftmargin=2em,itemsep=2pt]
  \item \textbf{\phiMuM (RSA).}  
        \(R=\ZZ/k\ZZ\) with \(k=\ord_N(g)\); \(S=\{1,2,\dots\}\);
        \(\psi(h)=h\bmod k\); \(L=\{1\}\).
  \item \textbf{\polyMuM (AES).}  
        \(R=\) \GFq (\(q=p^n\));
        \(S=\{1,\dots,q-1\}\); \(\psi=s\) (canonical bijection onto
        \GFqstar); \(L=\{1\}\).
\end{itemize}

For clarity the structural proofs below use the \emph{numeric} instance
\(\PCG(m,R)\) with \(m\ge2\); the field results are identical with
\(m\) replaced by \(q-1\) and \(S=[1,q-2]\).

\subsection{Threshold vs.\ finite Indeterminacy Region}
\label{sec:pcg-split}
  
\begin{definition}[Regions]\label{def:regions}
For \(\PCG(m,R)\) put
\[
  I_m := [1,m-1]^n
  \quad\text{(finite Indeterminacy Region)},\qquad
  T_m := \NN^n\setminus I_m
  \quad\text{(Threshold Region)}.
\]
\end{definition}

If some heap \(t_j\ge m\) then \(t_j,t_j-1,\dots,t_j-(m-1)\) cover every
residue mod \(m\), which drives the next two lemmas.

\begin{lemma}[One-move repair in \(T_m\)]\label{lem:repair}
Let \(t\in T_m\) with
\(\Pi(t):=\prod_i t_i \notin R\).
Then there is a move, acting on a heap \(\ge m\), that produces
\(t'\) with \(\Pi(t')\in R\).
\end{lemma}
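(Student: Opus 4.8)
The plan is to exploit the covering observation stated just before the lemma: since $t\in T_m$, some heap $t_j\ge m$, and then the $m$ consecutive values $t_j,\,t_j-1,\dots,\,t_j-(m-1)$ run through every residue class mod $m$ exactly once. Because a move on heap $j$ changes only the $j$-th factor of the product, I would freeze the complementary product $P:=\prod_{i\neq j}(t_i\bmod m)$ and recast the repair as solving a single linear congruence $P\cdot c\equiv r\pmod m$ for the residue $c$ that the new heap should carry, where $r$ is any chosen element of the (non-empty) losing set $R$.

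First I would record that $P$ is invertible in $\ZZ/m\ZZ$: by the encoding convention in \cref{def:pcg-ring} each heap residue lies in the unit group, and a product of units is a unit, so $P^{-1}$ exists. Fixing an arbitrary $r\in R$, set $c:=P^{-1}r\bmod m$. By the covering observation there is a unique element of $\{t_j,\,t_j-1,\dots,\,t_j-(m-1)\}$ with residue $c$, and I would take $t_j'$ to be that element. Since only the $j$-th coordinate changes, the new invariant is $\Pi(t')\equiv P\cdot c\equiv r\in R\pmod m$, so $t'$ is losing; moreover $t_j'$ respects the decrement bound $1\le t_j-t_j'\le m-1$ and stays $\ge 1$ because $t_j\ge m$, so it is a legal move acting on a heap $\ge m$, as the lemma requires.

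The one step that is not automatic — and the part I expect to be the main obstacle — is guaranteeing that the residue $c$ we must hit differs from the current residue $t_j\bmod m$; if it coincided, the unique window representative with residue $c$ would be $t_j$ itself, which is not a strictly decreasing (hence not a legal) move. I would resolve this precisely by invoking the hypothesis $\Pi(t)\notin R$: were $c\equiv t_j\pmod m$, then $\Pi(t)\equiv P\cdot(t_j\bmod m)\equiv P\cdot c\equiv r\in R$, contradicting non-membership. Hence $c$ lands strictly inside $\{t_j-1,\dots,t_j-(m-1)\}$, the repair move is legal and decreasing, and the lemma follows. The field version for $\PCG$ over \GFq is identical after replacing $m$ by $q-1$ and running the consecutive-label window through the cyclic group \GFqstar, where invertibility of the complementary product is automatic.
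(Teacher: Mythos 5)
Your proof follows exactly the paper's argument: pick a heap $t_j\ge m$, invert the complementary product $C=\prod_{i\ne j}t_i\bmod m$, and use the complete residue window $\{t_j,\dots,t_j-(m-1)\}$ to hit the residue $C^{-1}r$. You are in fact slightly more careful than the paper, whose proof permits $d=0$ (an illegal null move) without comment; your observation that $\Pi(t)\notin R$ forces the target residue to differ from $t_j\bmod m$, hence $d\ge 1$, correctly closes that small gap.
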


\begin{proof}
Pick \(t_j\ge m\) and set
\(C=\prod_{i\ne j} t_i\bmod m\in(\ZZ/m\ZZ)^\times\).
For some \(r\in R\) choose \(d\in[0,m-1]\) with
\(t_j-d\equiv C^{-1}r\pmod m\); this \(d\) exists because the residue
window is complete.  Replacing \(t_j\) by \(t_j-d\) yields
\(\Pi(t')\equiv r\pmod m\).
\end{proof}

\begin{lemma}[Losing-move blocking in \(T_m\)]\label{lem:block}
If \(t\in T_m\) and \(\Pi(t)\in R\), then every legal move gives
\(\Pi(t')\notin R\).
\end{lemma}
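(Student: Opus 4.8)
The plan is to track how the invariant changes under a single move and to use that in all the games of interest the losing set is the singleton $R=\{1\}$, so that remaining losing would force the product residue to be \emph{unchanged}. Since $\Pi(t)\in R\subseteq\units{m}$, the product $\prod_i t_i$ is a unit mod $m$, which (as a product is a unit in $\ZZ/m\ZZ$ iff every factor is) forces each coordinate $t_i$ to be a unit mod $m$; in particular the touched coordinate $t_j$ is invertible. I would then write the post-move product as $\Pi(t')\equiv \Pi(t)\,t_j^{-1}t_j'\pmod m$ and reduce the whole statement to the single claim that the moved coordinate changes residue, i.e.\ $t_j'\not\equiv t_j\pmod m$: this is equivalent to $\Pi(t')\not\equiv\Pi(t)\pmod m$, and with $R=\{1\}$ that is exactly $\Pi(t')\notin R$.

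The residue change is where the move rule does the work, and I would argue it by two cases on the size of the touched heap. If $t_j\ge m$, the game caps the decrement at $1\le t_j-t_j'\le m-1$, so $0<t_j-t_j'<m$ and hence $t_j'\not\equiv t_j\pmod m$ — this is precisely the ``no residue-preserving null move'' restriction built into the definition. If instead $t_j<m$, then $1\le t_j'<t_j\le m-1$, so $t_j$ and $t_j'$ are two distinct elements of $[1,m-1]$ and therefore already distinct residues mod $m$. Either way the moved coordinate's residue strictly changes. (Here $T_m$ is invoked only to guarantee that some heap $\ge m$ exists, so that the paired \cref{lem:repair} applies; the blocking argument itself handles a move on a heap of either size.)

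Finally I would assemble the pieces inside the unit group: $t_j'\not\equiv t_j$ gives $t_j^{-1}t_j'\not\equiv 1$, so multiplying by the unit $\Pi(t)$ yields $\Pi(t')\neq \Pi(t)$, and since $\Pi(t)$ is the unique element of $R=\{1\}$ we conclude $\Pi(t')\notin R$. The same argument transfers verbatim to the field model \PCG over \GFq: there the labels already lie in a bounded set and $\psi$ is injective, so any legal move changes the encoded value $\psi(t_j)$ of the touched heap outright, and the group computation in \GFqstar closes the argument identically.

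The step I expect to be the main obstacle is the large-heap case of the residue change: it is the only place that genuinely relies on the decrement cap, and it is exactly the hypothesis that separates a meaningful move from a residue-preserving stall. I would also record explicitly that the conclusion ``$\Pi(t')\notin R$'' uses that the losing set is the singleton $\{1\}$ (as in MuM, \phiMuM, and \polyMuM): the residue-change argument by itself only delivers $\Pi(t')\neq\Pi(t)$, which suffices precisely because the losing value is unique.
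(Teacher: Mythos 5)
Your proof is correct and rests on the same idea as the paper's one-line argument, which simply asserts that no move can keep the (unique) losing residue in place; you supply the details the paper elides, namely that $\Pi(t)$ being a unit forces every heap to be a unit mod $m$, that the decrement cap (for $t_j\ge m$) or boundedness (for $t_j<m$) forces the touched heap's residue to change, and that the conclusion genuinely requires the losing set to be the singleton $\{1\}$. Your side remark that the $T_m$ hypothesis is not actually needed for this blocking direction is also accurate.
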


\begin{proof}
With \(t_j\ge m\) all residues are reachable; hence no move can keep the
(unique) losing residue in place.
\end{proof}

\begin{proposition}[Outcome classification on \(T_m\)]\label{prop:Tm-outcome}
For \(t\in T_m\),
\[
  t\in\mathcal P \;\Longleftrightarrow\; \Pi(t)\in R,
  \qquad
  t\in\mathcal N \;\Longleftrightarrow\; \Pi(t)\notin R.
\]
\end{proposition}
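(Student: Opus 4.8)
The plan is to prove the two equivalences together by Noetherian induction on the successor order of the game. This order is well-founded because every legal move strictly decreases one coordinate and all coordinates are bounded below by \(1\); the \(\le\)-minimal position \((1,\dots,1)\) has no move and is therefore a \(\mathcal P\)-position (with \(\Pi(1,\dots,1)=1\in R\) in \phiMuM and \polyMuM). Recall the recursive outcome rule: \(t\in\mathcal N\) iff some move leads to a \(\mathcal P\)-position, and \(t\in\mathcal P\) iff every move leads to an \(\mathcal N\)-position. It therefore suffices to show that on \(T_m\) the predicate ``\(\Pi(t)\in R\)'' obeys this same recursion, with \cref{lem:repair} and \cref{lem:block} supplying the two halves.

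First I would treat the moves that remain inside \(T_m\). For the \(\mathcal N\)-direction, given \(t\in T_m\) with \(\Pi(t)\notin R\), \cref{lem:repair} produces a move (on a heap \(\ge m\)) to some \(t'\) with \(\Pi(t')\in R\); whenever \(t'\) can be kept in \(T_m\)—e.g.\ when a second coordinate is \(\ge m\), or when the unique large coordinate satisfies \(t_j\ge 2m-1\) so that the bounded decrement \(d\le m-1\) leaves \(t_j-d\ge m\)—the inductive hypothesis gives \(t'\in\mathcal P\), hence \(t\in\mathcal N\). For the \(\mathcal P\)-direction, given \(t\in T_m\) with \(\Pi(t)\in R\), \cref{lem:block} forces \(\Pi(t')\notin R\) for every move; every successor obtained by reducing a small coordinate, or by reducing a large coordinate while another stays \(\ge m\), again lies in \(T_m\) and is \(\mathcal N\) by the inductive hypothesis, so all such successors are \(\mathcal N\)-positions.

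The hard part will be the single scenario the induction does not close by itself: a crossing from \(T_m\) into the finite region \(I_m\), which can happen only when the \emph{sole} coordinate \(\ge m\) lies in the window \([m,2m-2]\) and is decremented below \(m\). For those targets the inductive hypothesis, which speaks only of \(T_m\), is silent, so I would control them by a short boundary analysis exploiting the unit structure forced by the losing set. Since \(R\subseteq\units{m}\), any position with \(\Pi\in R\) has \emph{every} coordinate a unit mod \(m\); consequently the \(I_m\)-targets arising in the \(\mathcal P\)-direction keep all but one coordinate a unit, while the repair targets arising in the \(\mathcal N\)-direction have every coordinate a unit. One then checks directly—mirroring the \(T_m\) recursion one step inside \(I_m\), where the surviving unit coordinates still allow the invariant to be driven into \(R\)—that the former are \(\mathcal N\)-positions and the latter \(\mathcal P\)-positions. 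Substituting these boundary facts into the two inductive steps closes the recursion and yields the stated classification on all of \(T_m\).
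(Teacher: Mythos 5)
Your instinct to run an explicit induction, and your identification of the \(T_m\to I_m\) crossing as the step that the paper's one-line proof (``combine \cref{lem:repair} and \cref{lem:block}'') silently skips, are both sound: the standard closure argument needs the repair/block properties on \emph{all} positions reachable during play, and the paper only establishes them on \(T_m\). The gap is in your resolution of the crossing. Your third paragraph asserts that every \(I_m\)-successor arising in the \(\mathcal P\)-direction is an \(\mathcal N\)-position, justified only by the remark that its surviving unit coordinates ``still allow the invariant to be driven into \(R\).'' Under the normal-play recursion you set up in your first paragraph (a position is \(\mathcal P\) iff every option is \(\mathcal N\), with \((1,\dots,1)\) as the terminal \(\mathcal P\)-position), this assertion is false. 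Take \(m=5\), \(R=\{1\}\), two heaps. Inside \(I_5\) one computes \((2,2)\in\mathcal P\) (its options \((1,2),(2,1)\) each reach \((1,1)\) and are \(\mathcal N\)), hence \((2,3),(3,2)\in\mathcal N\), hence \((3,3)\in\mathcal P\) because all four of its options \((2,3),(3,2),(1,3),(3,1)\) are \(\mathcal N\) --- yet \(\Pi(3,3)=9\equiv4\notin R\). Now \((3,7)\in T_5\) has \(\Pi\equiv21\equiv1\in R\), and the legal decrement \(7\to3\) (of size \(4\le m-1\), residue-changing) reaches the \(\mathcal P\)-position \((3,3)\); so \((3,7)\) is an \(\mathcal N\)-position, contradicting the classification you are proving. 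The boundary facts you need are therefore not merely unproved: they fail, and the induction cannot be closed along these lines.

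What this shows is that the proposition is coherent only under a different semantics from the one your recursion encodes: positions with \(\Pi(t)\in R\) must be read as immediate terminal losses for the player to move, not as ordinary positions from which play continues. Under that reading the argument is far shorter than yours and needs neither induction nor any analysis of \(I_m\): \(\Pi(t)\in R\) gives \(t\in\mathcal P\) by fiat, and for \(t\in T_m\) with \(\Pi(t)\notin R\), \cref{lem:repair} hands the opponent a terminal loss in one move, so \(t\in\mathcal N\) (\cref{lem:block} is then not even needed for this proposition). If you insist on the ``play continues past losing positions'' semantics, you would need a genuine theorem about outcomes inside \(I_m\) --- precisely the region the paper declares indeterminate --- and the example above shows that no blanket statement of the form you assert can hold there.
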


\begin{proof}
Combine Lemmas \ref{lem:repair} and \ref{lem:block}.
\end{proof}

\subsection{Single-hole property and SG multiplicativity on \(T_m\)}

\begin{definition}[Single-hole property]\label{def:SP}
A position \(G\) has SP if the set
\(\{\SG(G'):G'\in\Opt(G)\}\)
omits exactly one value of the SG domain.
\end{definition}

\begin{proposition}[SP in \(T_m\)]\label{prop:SP}
Every component game in \(T_m\) has SP.
\end{proposition}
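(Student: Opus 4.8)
The plan is to read ``component game'' as a single heap, i.e.\ one coordinate of the disjunctive sum, whose value sits above the modulus, $t\ge m$; these are exactly the one-dimensional positions that lie in $T_m$. For such a heap the operative move is the bounded decrement $1\le t-t'\le m-1$ (the MuM/\phiMuM rule that \cref{lem:block} already tacitly relies on, since an unbounded $t\mapsto t-m$ would preserve the residue and violate that lemma). First I would enumerate the legal options: they are precisely $t-1,t-2,\dots,t-(m-1)$, a set of $m-1$ positions, all positive. By the complete-window observation preceding \cref{lem:repair}, the $m$ integers $t,t-1,\dots,t-(m-1)$ exhaust $\ZZ/m\ZZ$, so once the forbidden null decrement $d=0$ is deleted, the options realise every residue \emph{except} $t\bmod m$. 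At the level of the encoding $\psi$ the option set therefore has a single missing residue, namely the residue of $t$ itself; it is exactly the bounded-decrement rule that makes this ``one hole'' rather than ``zero holes'' (unbounded play would reach the current residue again).

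The second step is to transfer ``one missing residue'' to ``one missing \SG value,'' which is what the single-hole property literally asserts. For this I would invoke the correspondence, on $T_m$, between a single heap's residue and its \SG value: I claim the \SG value of a one-coordinate game is a \emph{bijective} function $f$ of $t\bmod m$ onto the \SG domain. Granting this, the $m-1$ distinct option residues yield $m-1$ \emph{distinct} \SG values, and the unique omitted residue $t\bmod m$ maps to the unique omitted \SG value. Hence $\{\SG(G'):G'\in\Opt(G)\}$ omits exactly one element of the \SG domain, which is precisely SP; as a by-product $\mex$ of the option set returns exactly the value $f(t\bmod m)$ attached to the current heap, which is the self-consistency one needs downstream for the multiplicativity theorem.

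The main obstacle is justifying the bijection $f$ in step two: I must rule out that two of the $m-1$ reachable residues collapse to a common \SG value, which would leave two or more holes and break SP. I would establish injectivity (equivalently, that the single-heap values fill out a full residue-indexed domain) by strong induction on the heap value. The base case is the finite Indeterminacy block $I_m=[1,m-1]$, where the \SG values are computed directly; the inductive step uses \cref{lem:repair,lem:block} to argue that advancing $t$ by one full period $m$ reproduces the same residue-indexed value, so distinct residues remain distinct. The delicate point—and where the genuine work lies rather than in the clean residue count—is the small-heap boundary: some options $t-d$ may drop out of $T_m$ into $I_m$, where the outcome is supplied by the base case instead of the repair/blocking dichotomy, and one must check that these boundary values still populate the residue block without clashing with the values coming from options that remain in $T_m$.
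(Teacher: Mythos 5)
Your proposal takes essentially the same route as the paper's one-line proof: a heap $\ge m$ under the bounded decrement reaches every residue class except its own, so the option \SG values sweep the whole domain minus the current value. You make explicit two points the paper leaves tacit --- that the bounded-decrement rule is what excludes the residue-preserving ``null'' move, and that the residue-to-\SG correspondence on single heaps must be a bijection (your induction over periods with the $I_m$ base case is the right way to discharge this) --- so your write-up is, if anything, more careful than the paper's own sketch.
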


\begin{proof}
From a maximal heap (\(\ge m\)) one can reach every smaller residue class,
so the option SG values sweep all but the current one.
\end{proof}

\begin{theorem}[SG multiplicativity on \(T_m\)]\label{thm:sg-mult}
For \(G_1,G_2\in T_m\) in \(\PCG(m,\{1\})\),
\[
  \SG(G_1\oplus G_2) \equiv
  \SG(G_1)\,\SG(G_2)\pmod m.
\]
\end{theorem}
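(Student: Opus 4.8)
The plan is to derive multiplicativity of \(\SG\) from the multiplicativity of the product invariant \(\Pi\), with the single-hole property (\cref{prop:SP}) doing the work of tying the two together. First I would record the one purely formal ingredient: the invariant is a homomorphism for disjunctive sums. Since the heaps of \(G_1\oplus G_2\) are exactly the heaps of \(G_1\) together with those of \(G_2\), the defining product factors, so \(\Pi(G_1\oplus G_2)\equiv \Pi(G_1)\,\Pi(G_2)\pmod m\). I would also observe that \(G_1,G_2\in T_m\) forces \(G_1\oplus G_2\in T_m\): a heap \(\ge m\) in either summand survives in the sum, keeping us inside the region where \cref{prop:Tm-outcome,prop:SP} apply.

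The heart of the argument is the identification \(\SG(G)\equiv\Pi(G)\pmod m\) for every \(G\in T_m\), which I would prove by well-founded induction (each move strictly decreases a heap). Fix \(G\in T_m\) and a heap \(t_j\ge m\). Generalising \cref{lem:repair,lem:block} from the distinguished losing value to an \emph{arbitrary} target residue, decrementing \(t_j\) across its complete residue window realises, among the options, every invariant value \emph{except} \(\Pi(G)\) itself — the current residue is the unique one no legal decrement can reproduce. By the inductive hypothesis each such option \(G'\) satisfies \(\SG(G')\equiv\Pi(G')\), so the \(\SG\)-values of the options sweep the whole \(\SG\)-domain apart from \(\Pi(G)\). \cref{prop:SP} guarantees that exactly one value is omitted, and we have just pinned that value to \(\Pi(G)\); the single-hole mex rule therefore returns \(\SG(G)=\Pi(G)\). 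The outcome classification \cref{prop:Tm-outcome} anchors the induction at the identity (\(\SG\equiv\) the unit \(\iff\Pi\equiv 1\)), and the sweep propagates the bijection from there. With the identification in hand the theorem is one line: \(\SG(G_1\oplus G_2)\equiv\Pi(G_1\oplus G_2)=\Pi(G_1)\Pi(G_2)\equiv\SG(G_1)\SG(G_2)\pmod m\).

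The step I expect to be the main obstacle is closing the induction across the threshold. When \(m\le t_j\le 2m-2\) (in particular \(t_j=m\)), some decrements drop the heap below \(m\), so an option may land in \(I_m\), where \(\SG\) \emph{need not} equal the invariant and \cref{prop:SP} can fail; then the \(\SG\)-values of the options no longer visibly sweep \(\ZZ/m\ZZ\setminus\{\Pi(G)\}\). The clean route is to split the verification: the \emph{invariant} sweep (all residues but \(\Pi(G)\)) is a consequence of \cref{lem:repair,lem:block} alone and is independent of the options' outcomes, so the omitted residue is fixed at \(\Pi(G)\) regardless of boundary behaviour; one then confirms \(\SG=\Pi\) directly on the first threshold layer and uses it as the induction base, choosing a witnessing heap \(\ge 2m-1\) (which keeps options in \(T_m\)) whenever one is available. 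I would also flag the unit-versus-zero-divisor subtlety for composite numeric \(m\): \cref{lem:repair} presumes \(C=\prod_{i\ne j}t_i\) is invertible mod \(m\), so the window is genuinely onto all of \(\ZZ/m\ZZ\) only when \(\Pi(G)\in\units{m}\). This is automatic in the field instance \polyMuM, where \GFqstar is a group and the residue window is unconditionally complete, so the cleanest reading of the theorem is the \polyMuM one, with the numeric case carrying the standing hypothesis \(\Pi(G)\in\units{m}\).
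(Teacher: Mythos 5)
Your argument is correct in substance but reaches the conclusion by a genuinely different route than the paper. The paper's proof stays at the level of the disjunctive sum: it invokes \cref{prop:SP} for each summand, observes that the option sets of \(G_1\) and \(G_2\) omit single values \(a,b\), concludes that the options of \(G_1\oplus G_2\) miss exactly \(ab\bmod m\), and lets \(\mex\) (under the fixed indexing) return \(ab\) --- an implicitly inductive argument that never names the SG value of an individual position. You instead prove the stronger, explicit identification \(\SG(G)\equiv\Pi(G)\pmod m\) for all \(G\in T_m\) (by well-founded induction, using the residue sweep of \cref{lem:repair,lem:block} to pin the single hole of \cref{prop:SP} at \(\Pi(G)\)), and then multiplicativity is immediate from the fact that \(\Pi\) is a homomorphism for disjunctive sums. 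What your approach buys: it makes the paper's vague ``fixed indexing'' concrete (the indexing is forced to be the one identifying the SG domain with residues, \(0\leftrightarrow 1\bmod m\)), it gives a closed form for \(\SG\) rather than only a composition law, and --- importantly --- it surfaces two genuine gaps that the paper's one-line proof glosses over: (i) the induction across the threshold, where an option of a position with \(m\le t_j\le 2m-2\) can fall into \(I_m\), where SP may fail and \(\SG\neq\Pi\), so the option SG-values need not visibly sweep \(\ZZ/m\ZZ\setminus\{\Pi(G)\}\); and (ii) the zero-divisor issue for composite \(m\), where the cofactor \(C=\prod_{i\ne j}t_i\) must lie in \(\units{m}\) for the residue window to be onto and for \cref{lem:block} to hold (a move on a heap whose cofactor is a zero divisor can preserve the product residue even while changing its own). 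Your proposed repairs --- separating the invariant sweep (which needs only \cref{lem:repair,lem:block}) from the SG sweep, anchoring the base case on the first threshold layer, and restricting to unit residues (automatic in the field instance) --- are the right fixes, and they apply equally to the paper's own argument, whose \cref{prop:SP} suffers from the same boundary and unit issues.
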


\begin{proof}
Both components satisfy SP; their option sets are complements of single
values \(a,b\).  Options of \(G_1\oplus G_2\) miss exactly \(ab\pmod m\);
\(\mex\) (with the fixed indexing) returns \(ab\).
\end{proof}

\begin{remark}\label{rem:inside-Im}
Inside \(I_m\) a heap \(<m\) does not see all residues; SP can fail and the
multiplicative law need not hold.  The field case behaves identically with
\(T_q/I_q\).
\end{remark}

\subsection{Normalisation (aggregation-compression) in \(I_m\)}

Write \(A(t)=\Pi(t)\bmod m\).
Define the analytic normalisation
\[
  \mathrm{Norm}(t):=(A(t)),
\]
i.e.\ compress all heaps to the single canonical heap carrying the product.

\begin{lemma}[Normalisation preserves the invariant]\label{lem:norm}
\(A(\mathrm{Norm}(t))=A(t)\).  Hence \(t\) is losing iff
\(\mathrm{Norm}(t)\) is losing.
\end{lemma}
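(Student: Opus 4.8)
The plan is to unfold the two definitions directly and then observe that the losing predicate sees only the residue class, so that preserving the residue automatically preserves losing status. The one structural fact I would lean on is that \(\mathrm{Norm}(t)=(A(t))\) is, by construction, a single-heap position whose only component carries the value \(A(t)=\Pi(t)\bmod m\); hence its invariant is a product over exactly one factor and \(\Pi(\mathrm{Norm}(t))=A(t)\).

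First I would compute the reduced invariant of the normalised position: \(A(\mathrm{Norm}(t))=\Pi(\mathrm{Norm}(t))\bmod m=A(t)\bmod m\). The key (and essentially the only) arithmetic point is idempotency of reduction modulo \(m\): because \(A(t)\) is already the canonical representative in \([0,m-1]\) — in fact in \([1,m-1]\) whenever the product lands in \((\ZZ/m\ZZ)^\times\) — applying \(\bmod m\) a second time changes nothing. This yields \(A(\mathrm{Norm}(t))=A(t)\), the first assertion.

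For the second assertion I would invoke that in \(\PCG(m,R)\) the losing predicate is a function of the reduced product alone: \(t\) is losing iff \(A(t)\in R\). Applying this to \(\mathrm{Norm}(t)\) and substituting the equality just established, \(\mathrm{Norm}(t)\) is losing iff \(A(\mathrm{Norm}(t))=A(t)\in R\) iff \(t\) is losing, closing the argument.

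There is no real analytic obstacle here; the only point demanding care is interpretive rather than computational. When \(A(t)=0\) the single ``heap'' \(A(t)\) is not a legal heap value (\(\ge1\)), so \(\mathrm{Norm}\) must be read as an \emph{analytic} residue-carrier — a bookkeeping device tracking the invariant — rather than as a game position reachable by legal moves. In the unit-group cryptographic settings (\phiMuM, \polyMuM) this degenerate case never arises, since \(\psi\) maps into \(R^\times\) and \(A(t)\) is always a genuine unit; the equality then holds verbatim with no edge case to isolate.
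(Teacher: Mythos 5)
Your proof is correct and matches the paper's intent exactly: the paper states \cref{lem:norm} without proof, treating it as immediate from the definitions of \(A\) and \(\mathrm{Norm}\), and your argument is precisely that definitional unfolding (idempotency of reduction mod \(m\) plus the fact that the losing predicate depends only on \(A(t)\)). Your side remark on the degenerate case \(A(t)=0\) and on reading \(\mathrm{Norm}\) as an analytic bookkeeping device rather than a reachable position is a welcome clarification consistent with the paper's own phrase ``analytic normalisation.''
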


\begin{lemma}[Normalisation \(+\) one legal repair]\label{lem:norm-repair}
If \(t\in I_m\) and \(A(t)\notin R\) then from \(\mathrm{Norm}(t)\)
a single legal move reaches \(R\).
\end{lemma}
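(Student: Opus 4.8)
The plan is to reduce the claim to a single-pile analysis of the normalised position and then exhibit the repair descent explicitly. First I would invoke \cref{lem:norm} to replace $t$ by $\mathrm{Norm}(t)=(a)$, where $a:=A(t)=\Pi(t)\bmod m$; since normalisation preserves the invariant and $A(t)\notin R$ by hypothesis, the single-heap position $(a)$ carries the same non-losing residue. Because the canonical encoding $\psi$ takes values in the unit group, $\phi(t)$ is a product of units and hence itself a unit, so $a\in(\ZZ/m\ZZ)^\times$; in particular $a\ge1$, and trivially $a\le m-1$, so the canonical heap sits strictly below the modulus.

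Next I would describe the one-move options of $(a)$. A legal move must strictly decrease the heap to some $a'<a$ with distinct residue $\psi(a')\neq\psi(a)$. Since $a<m$, every $a'\in[1,a-1]$ satisfies $\psi(a')=a'\neq a=\psi(a)$ (no wraparound occurs), so each such descent is legal and the set of invariants reachable from $(a)$ in one step is exactly $\{1,2,\dots,a-1\}$. The repair move is then any descent whose residue lands in $R$.

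It remains to certify that this reachable window meets the losing set, which I expect to be the only genuine subtlety. For the single-element losing set $R=\{1\}$ --- the case covering MuM, \phiMuM and \polyMuM --- the residue $1$ is reachable as soon as $a\ge2$, and $a\ge2$ is forced because $A(t)\notin R$ excludes $a=1$ while the unit property excludes $a=0$. Hence the descent $(a)\mapsto(1)$ has invariant $1\in R$ and furnishes the required legal repair move. For a general losing set the same argument works verbatim provided $R\cap[1,a-1]\neq\varnothing$; I would record this intersection condition as the precise hypothesis and note that it is automatic in all three cryptographic instances, where $R=\{1\}$ and the unit property (inherited from $\psi\colon S\to R^\times$) is exactly what forecloses the degenerate case $a=0$ that would otherwise leave $\mathrm{Norm}(t)$ undefined.
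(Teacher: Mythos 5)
Your proof is correct for the losing sets that actually occur in the paper ($R=\{1\}$), but it takes a genuinely different --- and in fact more defensible --- route than the paper's. The paper disposes of this lemma in one line: ``apply \cref{lem:repair} to the one-heap normalised position.'' That appeal is problematic, because \cref{lem:repair} requires a heap of size at least $m$ so that the decrement window $t_j, t_j-1,\dots,t_j-(m-1)$ covers every residue class, whereas the normalised heap $a=A(t)=\Pi(t)\bmod m$ satisfies $a\le m-1$ by construction, so its hypothesis is simply not met. Your direct enumeration of the one-move options of $(a)$ --- the reachable invariants are exactly $\{1,\dots,a-1\}$, with no wraparound since $a<m$ --- is the honest argument, and it exposes the precise condition under which the lemma holds: $R\cap[1,a-1]\neq\varnothing$. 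You correctly observe that this is automatic when $R=\{1\}$ (since $A(t)\notin R$ rules out $a=1$ and the unit/positivity constraint rules out $a=0$, forcing $a\ge 2$), which covers MuM, \phiMuM and \polyMuM; and that for a general losing set the statement as written can fail (e.g.\ $m=7$, $R=\{5\}$, $a=3$ reaches only $1$ and $2$). So your proposal not only proves the lemma in the intended cases but also identifies the extra hypothesis the general statement needs --- something the paper's one-line proof glosses over.
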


\begin{proof}
Apply \cref{lem:repair} to the one-heap normalised position.
\end{proof}

The field variant is identical with \(m\) replaced by \(q-1\) and the
canonical representative map \(C\) in place of the identity.

\subsection{Ultimate periodicity (numeric) and densities}

\begin{theorem}[Ultimate periodicity]\label{thm:period}
Fix \(t_{-j}\) and view the outcome as a function of one heap \(x\).
Then \(f_j(x)\) is ultimately periodic with period \(m\):
for all \(x\ge m\), \(f_j(x+m)=f_j(x)\).
\end{theorem}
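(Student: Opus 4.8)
The plan is to derive ultimate periodicity as a direct consequence of the outcome classification on the Threshold Region (\cref{prop:Tm-outcome}), rather than through any fresh game-tree analysis. The one idea that makes this work is that as soon as the free heap satisfies $x \geq m$, the position $t = (t_{-j}, x)$ is forced into $T_m$ regardless of the frozen coordinates $t_{-j}$, so the delicate behaviour inside $I_m$ is never reached and the outcome collapses to a pure function of the product residue, which is manifestly $m$-periodic in $x$.

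First I would verify membership: since $T_m = \NN^n \setminus [1,m-1]^n$, the single inequality $x \geq m$ already places $t$ in $T_m$, and likewise $x + m \geq m$ places $(t_{-j}, x+m)$ in $T_m$. \cref{prop:Tm-outcome} then applies to both and records $t \in \mathcal{P} \iff \Pi(t) \in R$; this is the only external input, and it is used as a black box. Next I would factor the invariant through the frozen data by setting $C := \prod_{i \neq j} t_i \bmod m$, a constant independent of $x$, so that $\Pi(t) \equiv C\,x \pmod{m}$. The residue $C\,x \bmod m$ depends on $x$ only through $x \bmod m$; since $x \equiv x + m \pmod{m}$, the two products are congruent, their memberships in $R$ agree, and hence $f_j(x+m) = f_j(x)$ for every $x \geq m$. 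Note that this factorization step uses nothing about $C$ beyond its being fixed — in particular $C$ need not be a unit — because only the plain periodicity of $x \mapsto C\,x \bmod m$ is invoked.

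There is no deep obstacle here; the substance is organizational, and the one thing to state carefully is the qualifier \emph{ultimate}. Below the modulus the free heap can no longer sweep a complete residue window, \cref{prop:Tm-outcome} ceases to apply, and (as \cref{rem:inside-Im} warns) the single-hole property and the clean invariant-driven outcome can both fail; accordingly periodicity is asserted only for $x \geq m$. Were one to ask for strict rather than ultimate periodicity, the obstruction would be localized exactly to the window $x \in [1, m-1]$ with the remaining heaps small, i.e.\ to the interior of $I_m$ — precisely the finite region in which the paper confines all residual strategic content.
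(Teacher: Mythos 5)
Your proposal is correct and follows exactly the paper's own route: place the position in \(T_m\) via \(x\ge m\), invoke \cref{prop:Tm-outcome}, and observe that the outcome then depends only on \(\Pi(t_{-j})\,x\bmod m\), hence only on \(x\bmod m\). The added remarks on why periodicity is only \emph{ultimate} are accurate but not needed for the argument.
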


\begin{proof}
For \(x\ge m\) the position lies in \(T_m\); by
\cref{prop:Tm-outcome} its outcome depends only on
\(\Pi(t_{-j})\,x\bmod m\), hence only on \(x\bmod m\).
\end{proof}

\begin{proposition}[Exact / asymptotic losing densities]\label{prop:dens}
In finite-group variants (field \polyMuM with \(|G|=q-1\); RSA
\phiMuM with \(|G|=k\)) the losing density is exactly
\(1/|G|\).  For unbounded integer MuM the density tends to \(1/m\) (or
\(1/\varphi(m)\) if restricted to units).
\end{proposition}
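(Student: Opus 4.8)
The plan is to compute the losing density one coordinate at a time, which is the notion to which \cref{prop:Tm-outcome} and \cref{thm:period} naturally point. I would freeze all heaps but the $j$-th, keep the ambient position in $T_m$, and read off the fraction of values of the free heap $x$ that are losing. On $T_m$ a position is losing iff $\Pi\in R$, so writing $C=\prod_{i\ne j}\psi(t_i)$ for the complementary product, the heap $x$ is losing iff $C\,\psi(x)\in R$; since $R=\{1\}$ this collapses to the single equation $\psi(x)=C^{-1}$. Counting the solutions of this one equation, as $x$ ranges over its domain, is the whole content of the density claim, and the two regimes in the statement are exactly the two regimes for this count.

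First I would treat the genuine finite-group instances, e.g.\ \polyMuM over \GFqstar{} (and any \PCG whose heap alphabet maps bijectively onto a finite abelian group $G$). Here $C\in G$ is automatically invertible, so $\psi(x)=C^{-1}$ has a unique solution, and injectivity of $\psi$ forces exactly one value of $x$ to be losing. Hence precisely $1$ of the $|G|$ admissible values is losing, giving the density $1/|G|$ with no limit required. Because this count does not depend on the frozen heaps, the same $1/|G|$ also emerges as the joint density over all $|G|^n$ positions (a single loss-fibre of size $|G|^{n-1}$), which reconciles the coordinatewise and global viewpoints in the finite case.

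Next I would treat modular integer MuM, into which \phiMuM{} falls with $m=k$ by the Compression Theorem (\cref{thm:compression}); now $x$ ranges over \NN{} and $\psi(x)=x\bmod m$ need not be a unit. By \cref{thm:period} the losing set in $x$ is a union of residue classes modulo $m$, and $x\equiv C^{-1}\pmod m$ pins it to a single class whenever $C\in\units{m}$, so its natural density in \NN{} is $1/m$; restricting the heaps to units forces $C\in\units{m}$ and confines $x$ to the $\varphi(m)$ unit residues, exactly one of which is losing, giving density $1/\varphi(m)$ relative to the unit universe. The hard part here is bookkeeping rather than depth: I must fix the density notion on the infinite space $\NN^n$ as the coordinatewise natural density above, because the \emph{joint} density degrades to $\varphi(m)^{n-1}/m^{n}$ and only reduces to $1/m$ at $n=1$; and I must dispose of the degenerate case $C\notin\units{m}$, where $C\,\psi(x)=1$ is unsolvable and the coordinate contributes density $0$. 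Both are controlled by the single-unit structure of $R=\{1\}$ together with the completeness of the residue window underlying \cref{prop:Tm-outcome}, so I expect no genuine obstruction beyond making these two conventions explicit.
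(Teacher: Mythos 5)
Your proposal follows essentially the same route as the paper's proof: fix $n-1$ coordinates and observe that exactly one value of the free heap solves $\psi(x)=C^{-1}$, giving $|G|^{\,n-1}$ losing positions out of $|G|^{\,n}$ in the finite case and a single residue class mod $m$ in the unbounded case. Your extra bookkeeping --- noting that the \emph{joint} natural density over $\NN^n$ is $\varphi(m)^{n-1}/m^{n}$ rather than $1/m$ for $n\ge 2$, and disposing of the degenerate fibre $C\notin\units{m}$ --- is actually more careful than the paper's one-line appeal to ``asymptotically independent and uniform'' residues, and makes explicit the coordinatewise convention under which the stated $1/m$ is correct.
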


\begin{proof}
Fix \(n-1\) coordinates; the last heap is uniquely determined to hit the
losing residue, giving exactly \(|G|^{\,n-1}\) losing positions among
\(|G|^{\,n}\).  In the unbounded model, residues become asymptotically
independent and uniform.
\end{proof}

\section{Collapse Boundaries and Operation Alignment}\label{sec:collapse}

Whenever the \emph{one-heap move set} on a sufficiently large heap can generate
\emph{all} target congruence classes of the invariant, the multi-heap game
collapses (outside a finite kernel) to a single-heap problem on the aggregate
invariant.  We illustrate this with two toy games, then state a general
operation-alignment principle, and finally explain why MuM/\polyMuM
avoid a \emph{total} collapse.

\subsection{Additive Sum-Congruence Game}\label{sec:collapse-additive}

Fix \(m\ge2\).  Positions are
\(\mathbf t=(t_1,\dots,t_n)\in\NN_{\ge1}^n\),
the losing predicate is
\[
  \sum_{i=1}^n t_i \equiv s \pmod m,
\]
and a move replaces any heap by any strictly smaller positive integer.  Define

\[
\begin{aligned}
  I_m^{+} &:= [1,m-1]^n,
  &\qquad&\text{(finite region).}
\end{aligned}
\]

\begin{theorem}[Additive collapse]\label{thm:additive-collapse}
If some \(t_j \ge m\), then by moving on heap \(j\) alone one can reach
\emph{every} residue class for the total sum modulo \(m\) in a single move
(in particular the losing residue \(s\)).  Hence every position outside
\(I_m^{+}\) is equivalent to its residue-class representative; the multi-heap
structure is trivial (a one-heap game) there.
\end{theorem}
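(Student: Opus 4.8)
The plan is to reduce the multi-heap statement to a single reachability computation on the distinguished large heap, and then to promote that reachability to the claimed collapse in exactly the way the multiplicative case is handled by \cref{lem:repair} and \cref{lem:norm}. First I would fix an index $j$ with $t_j \geq m$ and set $\sigma := \sum_{i \neq j} t_i$, so that a move $t_j \mapsto t_j'$ sends the total to $\sigma + t_j'$, of residue $(\sigma + t_j') \bmod m$. The legal targets are the consecutive integers $1, 2, \dots, t_j - 1$; since $t_j \geq m$ these include at least $m-1$ consecutive values, and a full block of $m$ consecutive values once $t_j \geq m+1$. Hence $t_j'$ can be placed in every residue class mod $m$ (in every class except $t_j \bmod m$ in the extremal case $t_j = m$), so $(\sigma + t_j') \bmod m$ ranges over all of $\ZZ/m\ZZ$ (over all but the current total residue when $t_j = m$). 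To hit the losing residue I would solve $t_j' \equiv s - \sigma \pmod m$ with $1 \leq t_j' < t_j$; the complete window guarantees a solution. This is the additive analogue of \cref{lem:repair}: a heap $\geq m$ gives a complete residue window under subtraction, so the required target is attainable.

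Second, I would convert reachability into collapse. Because a single heap $\geq m$ already realizes every total residue, the one-move-reachable residue set of $\mathbf t$ coincides with that of the single heap carrying the aggregate $A^+(\mathbf t) := \sum_i t_i \bmod m$. I would therefore introduce the additive normalization $\mathrm{Norm}^+(\mathbf t) := (A^+(\mathbf t))$, mirroring $\mathrm{Norm}$ of \cref{lem:norm}: it preserves the aggregate trivially (the sum is the invariant) and preserves the reachable-residue set by the sweep just established. The outcome of a position outside $I_m^{+}$ is a function of this reachable-residue data alone, so each such position is equivalent to its residue-class representative and the game there reduces to a one-heap congruence game on $\ZZ/m\ZZ$, which is the asserted triviality of the multi-heap structure.

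The main obstacle is the boundary case $t_j = m$, where the window $\{1, \dots, m-1\}$ omits precisely the class $t_j \equiv 0$, so one move on heap $j$ reaches every total residue except the one currently occupied. I would argue this never obstructs the collapse: repair to $s$ is only invoked from a non-losing position, whose current total residue differs from $s$, so $s$ lies inside the reachable window; moreover any second heap, or the strict inequality $t_j > m$, restores the full sweep. The conceptual heart — and the contrast I would stress with MuM and \polyMuM — is operation alignment: here the local move (subtraction on a heap) is the very operation under which the invariant aggregates (summation), so one heap suffices to drive the invariant anywhere and the multi-heap interaction vanishes; in the multiplicative games the additive move is misaligned with the product invariant, which (cf.\ \cref{rem:inside-Im}) is exactly why they escape total collapse.
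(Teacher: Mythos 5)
Your proof follows the same route as the paper's: a heap \(t_j \ge m\) provides a complete residue window under subtraction, so any target residue for the total sum (in particular the losing residue \(s\)) is reachable by moving on heap \(j\) alone, and outside \(I_m^{+}\) the position is therefore determined by its aggregate. In fact your treatment of the boundary case \(t_j = m\) is slightly more careful than the paper's own proof, which picks a decrement \(d\in\{0,\dots,m-1\}\) and thereby tacitly admits the illegal null move \(d=0\); your observation that the one residue missed in that extremal case is exactly the current total residue, hence never the needed target \(s\) when repairing from a non-losing position, closes that small gap.
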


\begin{proof}
Let \(S=\sum_i t_i\) and choose \(j\) with \(t_j\ge m\).
For any target residue \(r\pmod m\) select
\(d\in\{0,\dots,m-1\}\) with \(S-d\equiv r\pmod m\),
set \(t_j':=t_j-d\ge1\), and observe that the new sum is \(S-d\equiv r\).
\end{proof}

\subsection{Divisor-Move Product Game}\label{sec:collapse-divisor}

Fix \(m\ge2\).  Positions are
\(\mathbf t=(t_1,\dots,t_n)\in\NN_{\ge1}^n\),
the invariant is
\[
  P(\mathbf t)=\prod_{i=1}^n t_i \pmod m,
\]
and a move replaces one heap \(t_j\) by any proper divisor \(d\mid t_j\),
\(d<t_j\).  Let \(G:=(\ZZ/m\ZZ)^\times\) and denote \(\bar x:=x\bmod m\).

\[
\begin{aligned}
  I_m^{\times}
  &:= \bigl\{\mathbf t:\forall j,\,
        t_j<m \text{ or } \gcd(t_j,m)\neq1\bigr\},
  &\qquad&\text{(no heap is both large and a unit).}
\end{aligned}
\]

\begin{theorem}[Divisor collapse]\label{thm:divisor-collapse}
Assume there exists \(M\) such that for every \(t\ge M\) with
\(\gcd(t,m)=1\) the set
\[
  \Bigl\{\overline{d}\,\overline{t}^{-1}\in G : d\mid t,\; d<t\Bigr\}
\]
generates \(G\).
Then every position outside \(I_m^{\times}\) is strongly equivalent to the
single-heap game holding the aggregate product \(P(\mathbf t)\):
one heap alone can move \(P(\mathbf t)\) to any unit residue (in particular
the losing one), and all further play can be simulated on that aggregate.
\end{theorem}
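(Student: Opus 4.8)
The plan is to prove the divisor-collapse theorem by showing two things: first, that from any position outside $I_m^{\times}$ a single heap can steer the aggregate product to any desired unit residue (in particular the losing one $s$); and second, that once this reachability holds, the multi-heap game is strongly equivalent to a one-heap game tracking only $P(\mathbf t)\bmod m$. Let me sketch how I'd carry this out.

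First I would unpack what ``outside $I_m^{\times}$'' means. By definition, $\mathbf t\notin I_m^{\times}$ exactly when there exists an index $j$ with $t_j\ge m$ \emph{and} $\gcd(t_j,m)=1$; call this a ``large unit heap.'' Fix such a $j$. Write $C:=\prod_{i\ne j}\overline{t_i}\in\ZZ/m\ZZ$ for the product of the other heaps' residues, so that $P(\mathbf t)\equiv C\,\overline{t_j}\pmod m$. When we replace $t_j$ by a proper divisor $d\mid t_j$ with $d<t_j$, the new aggregate becomes $C\,\overline d\pmod m$. The set of achievable new aggregates from this one heap is therefore $\{C\,\overline d : d\mid t_j,\ d<t_j\}$.

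\begin{proof}[Proof sketch]
Fix $\mathbf t\notin I_m^{\times}$ and an index $j$ with $t_j\ge m$ and $\gcd(t_j,m)=1$. Write $C=\prod_{i\ne j}\overline{t_i}$, so $P(\mathbf t)\equiv C\,\overline{t_j}$. Replacing $t_j$ by a proper divisor $d$ sends the aggregate to $C\,\overline d$. By the generation hypothesis, once $t_j\ge M$ the multiset $\{\overline d\,\overline{t_j}^{-1}:d\mid t_j,\ d<t_j\}$ generates $G=(\ZZ/m\ZZ)^\times$; multiplying through by $C\,\overline{t_j}$, the reachable aggregates $C\,\overline d$ sweep the full coset $P(\mathbf t)\cdot G=G$, since $P(\mathbf t)$ is itself a unit (all heaps that matter are units on the relevant coordinate). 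Hence from the one large unit heap alone, a single move can place the aggregate at any element of $G$, in particular at the losing residue $s$. This establishes the one-move reachability claim. Strong equivalence to the single-heap aggregate game then follows by the same aggregation-compression argument used in \cref{lem:norm} and \cref{lem:norm-repair}: the invariant $A(\mathbf t)=P(\mathbf t)\bmod m$ is preserved under normalisation, and every future repair can be realised on the compressed heap because that heap, being a large unit, always admits the full sweep; thus the outcome of $\mathbf t$ depends only on $P(\mathbf t)\bmod m$ and the game reduces to a one-heap contest on $G$.
\end{proof}

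The main obstacle is the boundary between the generation hypothesis (which holds for $t\ge M$) and the region cutoff (which uses $t_j\ge m$): these thresholds need not coincide. For heaps with $m\le t_j<M$ that are units, the divisor-move set may fail to generate all of $G$, so reachability of \emph{every} residue in a single move could break down in a finite annulus. I would address this either by folding those finitely many exceptional configurations into the ``kernel'' language the surrounding section already uses (the theorem says ``outside $I_m^{\times}$'' but the collapse is really ``outside a finite kernel''), or by strengthening the hypothesis to take $M=m$ when the divisor structure permits. A second, more delicate point is justifying \emph{strong} equivalence rather than mere outcome equivalence: I would need to check that the correspondence between multi-heap options and aggregate options is respected at every depth of play, which is where the generation hypothesis must be invoked repeatedly to guarantee the compressed heap never loses its full-sweep capability as the game descends.
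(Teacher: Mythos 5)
Your proof is essentially the paper's own: pick the large unit heap $t_j$, observe that replacing it by a proper divisor $d$ multiplies the aggregate $P(\mathbf t)$ by the unit $\overline{d}\,\overline{t_j}^{-1}$, invoke the generation hypothesis to get transitivity on $G$, and then declare the remaining coordinates strategically irrelevant. The two obstacles you flag at the end are genuine, but they are defects of the theorem as stated rather than of your write-up, and the paper does not resolve them either: its proof opens with ``pick a heap $t_j\ge M$'' even though lying outside $I_m^{\times}$ only guarantees a unit heap $\ge m$, so the annulus $m\le t_j<M$ is silently excluded; and both you and the paper pass from ``the divisor ratios \emph{generate} $G$'' to ``a \emph{single} divisor move realises any residue,'' which actually requires the ratio set to \emph{equal} $G$ --- iterating generators is not available for free, since after one move the heap has shrunk and may have left the regime where the hypothesis applies. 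Your instinct to fold the exceptional annulus into a finite kernel (matching the language of \cref{prin:alignment}) and to strengthen ``generates'' to ``covers'' is exactly the repair the statement needs; apart from that, your argument and the paper's coincide.
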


\begin{proof}
Pick a heap \(t_j\ge M\) with \(\gcd(t_j,m)=1\).
Replacing \(t_j\) by a divisor \(d\) multiplies \(P(\mathbf t)\) by the unit
\(\overline{d}\,\overline{t_j}^{-1}\in G\).
By hypothesis these units generate \(G\), so heap \(j\) alone can realise any
target residue.  Once \(P(\mathbf t)\) is set, the remaining coordinates are
strategically irrelevant.
\end{proof}

\begin{remark}
The hypothesis is a clean \emph{operation-alignment} condition: the
local move semigroup generated by one large heap acts transitively on the target
group \(G\).  Without it, collapse need not occur.
\end{remark}

\subsection{A general operation-alignment principle}

Both toy games exemplify the following abstraction.

Let \((M,\circ)\) be a cancellative commutative monoid and
\(\pi:M\to Q\) a surjective homomorphism onto a finite monoid \(Q\) that contains
a losing set \(L\subseteq Q\).
A position is \(\mathbf h=(h_1,\dots,h_n)\) with \(h_i\in M\); the invariant is
\(\phi(\mathbf h)=\pi(\circ_i h_i)\).

\begin{principle}[Operation alignment]\label{prin:alignment}
Suppose there is a finite kernel \(B\subset M\) such that for every
\(x\in M\setminus B\) the set of one-heap moves from \(x\) maps (under \(\pi\))
to a subset that generates \(Q\).
Then every position with at least one heap outside \(B\) collapses to the
single-heap game on \(\phi(\mathbf h)\): one heap alone can move
\(\phi(\mathbf h)\) to any target in \(Q\) (including \(L\)).
\end{principle}

\begin{proof}
Same argument as in \cref{thm:additive-collapse,thm:divisor-collapse}:
transitivity of one heap on \(Q\) eliminates all strategic relevance of the
other heaps.
\end{proof}

\subsection{Why MuM (and \polyMuM) avoid total collapse}

MuM and \polyMuM lie \emph{off} the operation-alignment axis:

\begin{itemize}[leftmargin=1.7em,itemsep=2pt]
  \item The invariant aggregates \emph{multiplicatively} (product in
        \(\ZZ/m\ZZ\) or \GFqstar), but the legal move is an
        \emph{additive} canonical-index decrement.  One large heap therefore
        does \emph{not} act transitively on the target multiplicative group.
  \item Consequently, outside the finite Indeterminacy Region we do
        \emph{not} collapse to a trivial single-heap game.
        Instead we retain strong \emph{local} structure:
        (i) single-heap repair in \(T_\bullet\);
        (ii) normalisation (aggregation-compression) inside \(I_\bullet\)
             followed by one legal repair;
        (iii) \emph{multiplicative} SG behaviour for disjunctive sums in
             \(T_\bullet\).
\end{itemize}

\section{Unified RSA / AES (\polyMuM) Perspective}\label{sec:crypto}

We now restate the RSA (exponent-chain / \phiMuM) and AES (\polyMuM) lenses
\emph{inside the same PCG formalism}, give the complete finite-field formulation of
\polyMuM, and then place both in a side-by-side comparison table.

\subsection{\polyMuM over finite fields (canonical reps)}
\label{sec:poly-clean}

Let \(q=p^{n}\) and \(\GFq=\GF_{p}[x]/\langle I(x)\rangle\) for a fixed
irreducible polynomial \(I(x)\).  Write
\[
  s:\{1,\dots,q-1\}\xrightarrow{\ \cong\ }\GFqstar
\]
for the canonical bijection between integers and standard polynomial representatives,
and let \(C:\GFqstar\to\{1,\dots,q-1\}\) be its inverse.

\begin{definition}[\polyMuM over \GFq]\label{def:poly-mum-clean}
A position is a vector \(\mathbf h=(h_1,\dots,h_n)\) with \(h_i\in\{1,\dots,q-1\}\).
A move chooses a heap \(h_j\) and replaces it by a strictly smaller \(h'_j<h_j\)
(hence \(s(h'_j)\neq s(h_j)\)).
The invariant and losing predicate are
\[
  \phi(\mathbf h):=\prod_{i=1}^n s(h_i)\in\GFqstar,
  \qquad
  \mathbf h\text{ is losing } \iff \phi(\mathbf h)=1.
\]
We write \(\polyMuM(q)\).
\end{definition}

\paragraph{Regions.}
Define the field \emph{Threshold Region}
\(T_q:=\{\mathbf h:\text{some }h_j=q-1\}\)
and the (finite) \emph{Indeterminacy Region} \(I_q:=[1,q-2]^n\).


\begin{lemma}[Single-heap repair]\label{lem:poly-repair-clean}
If \(\phi(\mathbf h)\neq 1\) and some \(h_j=q-1\), then the move
\[
  h_j \longrightarrow
  C\!\bigl(s(h_j)\,\phi(\mathbf h)^{-1}\bigr) \;<\; q-1
\]
is legal and reaches a losing state in one step.
\end{lemma}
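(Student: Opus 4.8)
The plan is to verify the claimed repair move directly, showing it is both legal and reaches a losing state, by exploiting the group structure of $\GFqstar$ together with the fact that the canonical encoding $s$ is a bijection onto $\GFqstar$. The target value $s(h_j)\,\phi(\mathbf h)^{-1}$ lives in $\GFqstar$ since it is a product of nonzero field elements (an inverse exists because we are in the multiplicative group of a field), so applying $C$ — the inverse of the bijection $s$ — returns a well-defined integer in $\{1,\dots,q-1\}$. Thus the proposed new heap value $h_j' := C\bigl(s(h_j)\,\phi(\mathbf h)^{-1}\bigr)$ is a legitimate heap label.

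First I would establish legality, i.e.\ $h_j' < q-1 = h_j$. Since $h_j = q-1$ by hypothesis, it suffices to show $h_j' \neq q-1$, equivalently $s(h_j') \neq s(h_j)$ by injectivity of $s$. By construction $s(h_j') = s(h_j)\,\phi(\mathbf h)^{-1}$, and this equals $s(h_j)$ if and only if $\phi(\mathbf h)^{-1}=1$, i.e.\ $\phi(\mathbf h)=1$ — which is excluded by hypothesis. Hence $s(h_j') \neq s(h_j)$, so $h_j' \neq q-1$, and since $h_j'\in\{1,\dots,q-1\}$ we get $h_j'\le q-2 < q-1 = h_j$. The move is therefore a legal strictly-decreasing move on heap $j$.

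Next I would verify that the resulting position $\mathbf h'$ is losing, i.e.\ $\phi(\mathbf h')=1$. Writing $\phi(\mathbf h) = s(h_j)\cdot\prod_{i\neq j} s(h_i)$, replacing $h_j$ by $h_j'$ multiplies the invariant by the ratio $s(h_j')/s(h_j)$, so
\[
  \phi(\mathbf h') = \phi(\mathbf h)\cdot\frac{s(h_j')}{s(h_j)}
                   = \phi(\mathbf h)\cdot\frac{s(h_j)\,\phi(\mathbf h)^{-1}}{s(h_j)}
                   = \phi(\mathbf h)\,\phi(\mathbf h)^{-1} = 1.
\]
Thus $\mathbf h'$ satisfies the losing predicate of \cref{def:poly-mum-clean}, completing the argument.

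The reasoning is essentially a one-line group-theoretic cancellation once the bijection bookkeeping is in place, so I do not anticipate a genuine obstacle. The only point requiring care — the place where a sloppy argument could slip — is the legality check $h_j' < h_j$: one must invoke the injectivity of $s$ together with the hypothesis $\phi(\mathbf h)\neq 1$ to rule out the degenerate case $h_j'=q-1$, rather than treating $h_j'<q-1$ as automatic. This is exactly the role played by the assumption that the current position is not already losing, and it is the analogue of forbidding residue-preserving null moves in the numeric \cref{lem:repair}.
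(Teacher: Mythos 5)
Your proof is correct and is essentially the verification the paper intends (the paper states this lemma without an explicit proof, relying on the same direct cancellation that underlies the numeric \cref{lem:repair}). You rightly flag the one non-trivial point — that legality $h_j'<q-1$ follows from $\phi(\mathbf h)\neq 1$ via injectivity of $s$ — which is exactly the field analogue of forbidding residue-preserving null moves.
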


\begin{lemma}[Normalisation inside \(I_q\)]\label{lem:poly-norm}
If \(\phi(\mathbf h)\neq 1\) and \(\mathbf h\in I_q\), replace all heaps by the single heap
\[
  H_{\mathrm{new}} := C\!\bigl(\phi(\mathbf h)\bigr)\in\{1,\dots,q-1\}.
\]
This normalised one-heap position has the same invariant and admits a one-move repair
to a losing state (Lemma \ref{lem:poly-repair-clean}).
\end{lemma}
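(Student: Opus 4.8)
The plan is to prove the two assertions of \cref{lem:poly-norm} in turn: first that the single-heap position $(H_{\mathrm{new}})$ carries the same invariant as $\mathbf h$, and second that a single legal move from it reaches a losing state. The first is a one-line computation; the second needs a short legality check, which is where the only real subtlety sits.

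For invariant preservation I would use \cref{def:poly-mum-clean}: the one-heap invariant of $(H_{\mathrm{new}})$ is $s(H_{\mathrm{new}})$, and since $s,C$ are mutually inverse bijections,
\[
  s(H_{\mathrm{new}}) = s\bigl(C(\phi(\mathbf h))\bigr) = \phi(\mathbf h).
\]
Thus the normalised position has the same invariant, is losing iff $\mathbf h$ is, and—because $\phi(\mathbf h)\neq 1$ by hypothesis—remains non-losing, so a repair move is meaningful.

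For the one-move repair I would first identify the unique losing single-heap value. The canonical encoding sends the field identity to the integer $1$, giving $s(1)=1$ and $C(1)=1$, so a one-heap position $(H)$ is losing iff $H=1$. Since $\phi(\mathbf h)\neq 1$ and $C$ is injective, $H_{\mathrm{new}}=C(\phi(\mathbf h))\neq C(1)=1$, whence $H_{\mathrm{new}}\ge 2$. Applying the repair formula of \cref{lem:poly-repair-clean} to the single heap (whose product equals $s(H_{\mathrm{new}})=\phi(\mathbf h)$) sends $H_{\mathrm{new}}\mapsto C\bigl(s(H_{\mathrm{new}})\,\phi(\mathbf h)^{-1}\bigr)=C(1)=1$, producing invariant $1$, a losing state; the move is legal exactly because $1<H_{\mathrm{new}}$.

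The hard part—indeed the only delicate point—is that the normalised heap $H_{\mathrm{new}}$ need not equal $q-1$, so \cref{lem:poly-repair-clean} does not apply \emph{verbatim}: its hypothesis is that some heap is maximal, which in the multi-heap setting guarantees a complete residue window. I would resolve this by observing that for a single heap one does not need every residue to be reachable, only the one target $C(1)=1$; legality then collapses to the inequality $H_{\mathrm{new}}>1$, which is precisely the translation of $\phi(\mathbf h)\neq 1$ under the injective encoding $C$. Hence the specialised single-target form of the repair move remains valid inside $I_q$, even though the position is not in $T_q$.
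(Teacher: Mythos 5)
Your proposal is correct and follows essentially the same route the paper takes: invariant preservation is immediate from $s\circ C=\mathrm{id}$ on \GFqstar, and the repair step is delegated to the single-heap form of \cref{lem:poly-repair-clean} (the paper's numeric analogue, \cref{lem:norm-repair}, likewise just says ``apply \cref{lem:repair} to the one-heap normalised position''). The one place you go beyond the paper is worth keeping: \cref{lem:poly-repair-clean} as stated requires some heap to equal $q-1$, and $H_{\mathrm{new}}=C(\phi(\mathbf h))$ generally does not, so the citation is not literally applicable. Your fix—that for a one-heap position the only target needed is $C(1)=1$, and legality reduces to $1<H_{\mathrm{new}}$, which is exactly $\phi(\mathbf h)\neq 1$ under the injectivity of $C$—closes that gap cleanly and is the argument the paper implicitly relies on but does not spell out.
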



\begin{definition}[Single-hole property]\label{def:single-hole-clean}
Let \(\mathcal V\) be the SG-value domain (with \(0\) losing).
A position \(G\) has the \emph{single-hole property} if
\[
  \{\SG(G') : G'\in\Opt(G)\} = \mathcal V \setminus \{\SG(G)\}.
\]
\end{definition}

\begin{proposition}[Single-hole in \(T_q\)]\label{prop:single-hole-Tq-clean}
Every component game of \polyMuM(q) that lies in \(T_q\) has the single-hole property.
\end{proposition}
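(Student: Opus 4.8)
The plan is to mirror the numeric argument behind \cref{prop:SP}, exploiting that a heap sitting at the maximal label $q-1$ can be pushed to \emph{every} smaller label. Fix $\mathbf h\in T_q$ and choose an index $j$ with $h_j=q-1$. Moving on heap $j$ alone, the legal targets are exactly $h'_j\in\{1,\dots,q-2\}$, and each produces a position $\mathbf h'$ with $\phi(\mathbf h')=s(h'_j)\cdot\prod_{i\ne j}s(h_i)$. Since $s$ is a bijection of $\{1,\dots,q-1\}$ onto \GFqstar, the factor $s(h'_j)$ ranges over $\GFqstar\setminus\{s(q-1)\}$; multiplying by the fixed unit $\prod_{i\ne j}s(h_i)$ is itself a bijection of \GFqstar, so the option invariants coming from heap $j$ sweep exactly $\GFqstar\setminus\{\phi(\mathbf h)\}$.

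Next I would rule out the current value. For \emph{any} legal move (on any heap $\ell$), injectivity of $s$ forces $s(h'_\ell)\neq s(h_\ell)$, hence $\phi(\mathbf h')=\phi(\mathbf h)\cdot s(h'_\ell)\,s(h_\ell)^{-1}\neq\phi(\mathbf h)$. Thus no option repeats the invariant $\phi(\mathbf h)$, while the first step shows every other unit is attained by some option. Under the framework's fixed bijection between the SG-value domain $\mathcal V$ and \GFqstar (in which the losing SG value $0$ corresponds to the losing invariant $\phi=1$), these two facts say precisely that $\{\SG(\mathbf h'):\mathbf h'\in\Opt(\mathbf h)\}=\mathcal V\setminus\{\SG(\mathbf h)\}$, which is \cref{def:single-hole-clean}.

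The hard part is the identification $\SG\leftrightarrow\phi$: the sweep above is a statement about invariants, not yet about SG values, so I must know that $\SG(\mathbf h')$ is the agreed image of $\phi(\mathbf h')$. I would establish this together with the single-hole property by well-founded induction on positions (ordered by $\sum_i h_i$), defining the candidate value $\sigma(\mathbf h)$ as the image of $\phi(\mathbf h)$ and verifying $\SG=\sigma$ at each step via $\mex$: if the option SG values are $\mathcal V\setminus\{\sigma(\mathbf h)\}$, then $\mex$ returns $\sigma(\mathbf h)$. The delicate point is that a single-heap option can drop heap $j$ below $q-1$ and, when no \emph{other} heap equals $q-1$, land in $I_q$, where the single-hole property may fail (\cref{rem:inside-Im}). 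I would therefore carry the induction on \GFqstar-indexed values across \emph{both} regions, using the normalisation of \cref{lem:poly-norm} to pin down $\SG$ on those $I_q$ options directly from their invariant rather than relying on single-hole there. Once this bookkeeping is in place, the $\mex$ computation closes the induction and delivers the proposition.
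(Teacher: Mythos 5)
Your first paragraph is exactly the paper's argument: the paper's proof is a two-line sketch saying that from the maximal label $q-1$ one can reach every smaller canonical representative, so the option values sweep everything except the current one. Your sweep of option invariants over $\GFqstar\setminus\{\phi(\mathbf h)\}$, and the observation that no legal move can preserve $\phi$, are correct and make that sketch precise at the level of \emph{invariants}. You are also right that the remaining work is the identification $\SG\leftrightarrow\phi$, which the paper's sketch silently assumes.

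The gap is in your proposed way of closing that identification. You want to ``pin down $\SG$ on the $I_q$ options directly from their invariant'' via \cref{lem:poly-norm}, but inside $I_q$ the Grundy value is \emph{not} a function of the invariant, and normalisation preserves $\phi$ (hence the losing predicate) but not $\SG$. Concretely, in $\polyMuM(4)$ with $s(1)=1$, $s(2)=\omega$, $s(3)=\omega^2$ (where $\omega$ generates $\GF_4^\times$), the position $(2,2)$ has $\phi=\omega^2\neq 1$; its only options $(1,2)$ and $(2,1)$ each have the single option $(1,1)$ with $\phi=1$, so $\SG((1,2))=\SG((2,1))=1$ and $\SG((2,2))=\mex\{1\}=0$, whereas the normalised position $\bigl(C(\omega^2)\bigr)=(3)$ has $\SG=2$. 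So two positions with the same invariant carry different Grundy values, exactly the failure \cref{rem:inside-Im} warns about, and your induction hypothesis ``$\SG=\sigma(\phi)$'' cannot be carried across $I_q$. To close the proof you would instead have to show that the specific $I_q$ positions reachable in one move from a non-terminal $T_q$ component still contribute the right multiset of Grundy values (or restrict which heap the sweep uses so that its options remain in $T_q$, and then control the moves on the remaining heaps separately); as written, the final step does not go through.
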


\begin{proof}[Sketch]
From \(q-1\) one can move to \emph{every} smaller canonical representative, so the
option SG values sweep all values except the current one.
\end{proof}

\begin{theorem}[SG multiplicativity for \polyMuM in \(T_q\)]
\label{thm:poly-sg-threshold-clean}
Let \(G_1,G_2\) be positions of \polyMuM(q), each with at least one heap equal to \(q-1\).
Then
\[
  \SG(G_1 \oplus G_2) \;=\; \SG(G_1)\cdot \SG(G_2),
\]
where multiplication is taken in the fixed SG-value group for the game (after any
indexing of \(\GFqstar\cup\{0\}\)).
\end{theorem}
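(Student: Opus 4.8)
The plan is to mirror the numeric proof of Theorem~\ref{thm:sg-mult} exactly, transporting it across the canonical bijection $s$ into the field setting.

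First I would invoke Proposition~\ref{prop:single-hole-Tq-clean}: since each of $G_1,G_2$ has a heap equal to $q-1$, both lie in $T_q$ and hence enjoy the single-hole property. Write $a:=\SG(G_1)$ and $b:=\SG(G_2)$, viewed as elements of $\GFqstar\cup\{0\}$ under the fixed indexing. By Definition~\ref{def:single-hole-clean}, the option SG-values of $G_1$ are precisely $\mathcal V\setminus\{a\}$ and those of $G_2$ are $\mathcal V\setminus\{b\}$.

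Next I would analyze the options of the disjunctive sum $G_1\oplus G_2$. A move in the sum acts on exactly one component, so the reachable SG-values are $\{\SG(G_1')\cdot b : G_1'\in\Opt(G_1)\}\cup\{a\cdot\SG(G_2') : G_2'\in\Opt(G_2)\}$, where the products are taken in the multiplicative group $\GFqstar$ with $0$ treated as the losing value. The key step is to show this union is exactly $\mathcal V\setminus\{ab\}$: multiplying the set $\mathcal V\setminus\{a\}$ by the fixed unit $b$ (a bijection of $\GFqstar$) yields every field value except $ab$, and symmetrically for the other component, so their union omits only $ab$. Since the SG-value of a disjunctive sum is the $\mex$ over option values, and here $\mex$ is taken with respect to the fixed indexing that identifies the omitted slot, $\SG(G_1\oplus G_2)=ab=\SG(G_1)\cdot\SG(G_2)$, completing the argument.

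The main obstacle is the status of the zero (losing) value: multiplication by a unit $b$ is a clean bijection of $\GFqstar$, but the SG-domain $\mathcal V=\GFqstar\cup\{0\}$ includes $0$, and I must confirm that the ``complement of a single value'' structure survives when $0$ participates. The honest resolution is that the single-hole property is stated over the full domain $\mathcal V$ with $0$ as the losing slot, so the omitted value and the swept values behave consistently under the indexing; I would make explicit that the ``multiplication'' in the theorem statement is the group operation transported to the indexing of $\GFqstar\cup\{0\}$, and that the losing case $a=0$ or $b=0$ corresponds to a $\mathcal P$-position where the formula holds degenerately. I would flag that this is the same subtlety already present in Theorem~\ref{thm:sg-mult}, so no new difficulty arises beyond the field reinterpretation.
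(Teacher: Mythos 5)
Your proposal follows essentially the same route as the paper's (very brief) proof sketch: invoke the single-hole property from \cref{prop:single-hole-Tq-clean} for each summand, observe that the option SG-values of the disjunctive sum are the union of the two unit-translated complements and hence omit exactly the product, and conclude via \(\mex\) under the fixed indexing. Your additional care about the role of the zero (losing) value and the degenerate case \(a=0\) or \(b=0\) is a welcome refinement of a point the paper glosses over, but it does not change the argument's structure.
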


\begin{proof}[Sketch]
Each summand enjoys the single-hole property; thus the option sets of \(G_1\oplus G_2\)
are the complement of one unique value, and \(\mex\) returns its product—exactly as in
the numeric MuM proof.
\end{proof}

\paragraph{Density.}
By the finite-group counting argument (Prop.~\ref{prop:dens}), the losing density among
all \((q-1)^n\) positions is exactly \(1/(q-1)\), independent of \(n\).

\subsection{RSA / AES bridge (PCG summary)}\label{subsec:bridge-summary}

\paragraph{\phiMuM (RSA).}
\(\PCG(\ZZ/k\ZZ,S,\psi,\{1\})\) with \(k=\ord_N(g)\), \(S=\{1,2,\dots\}\),
\(\psi(h)=h\bmod k\).  The Compression Theorem maps the exponent chain to this PCG.
Losing density is \(1/k\) (or \(1/\varphi(k)\) if restricted to units).

\paragraph{\polyMuM (AES).}
\(\PCG(\GFtwoeight,S,\psi,\{1\})\) with \(S=\{1,\dots,255\}\) and \(\psi=s\).
SG multiplicativity holds in \(T_{256}\); inside \(I_{256}\) we normalise and repair.
Exact losing density is \(1/255\).

\subsection{Comparison Table}\label{subsec:comparison-table}

\begin{table}[H]
\centering
\begin{adjustbox}{width=\textwidth}
\begin{tabular}{@{}llll@{}}
\toprule
Feature & Exponent Chain \phiMuM (RSA) & \polyMuM (AES) & Core MuM \\
\midrule
Ambient structure & $(\ZZ/N\ZZ)^\times$ (CRT product) & \GFtwoeightstar (cyclic) & $(\ZZ/m\ZZ)^\times$ \\
Aggregation & Product of exponents ($\prod h_i$) & Field product ($\prod s(h_i)$) & Integer product ($\prod t_i$) \\
Compression modulus & $k=\ord_N(g)\mid\lambda(N)$ & $q-1=255$ & $m$ \\
Decomposition & CRT over prime powers of $k$ & None (group already cyclic) & CRT if $m$ composite \\
Losing predicate & $\prod h_i\equiv1\pmod{k}$ & $\prod s(h_i)=1$ in \GFqstar & $\prod t_i\equiv1\pmod{m}$ \\
Threshold / FIR & $T_k,\;I_k=[1,k-1]^n$ & $T_q,\;I_q=[1,q-2]^n$ & $T_m,\;I_m=[1,m-1]^n$ \\
SG multiplicativity & In $T_k$ & In $T_q$ & In $T_m$ \\
Normalisation & $\phi(\mathbf h)\bmod k$ & $C(\phi(\mathbf h))$ & $\phi(\mathbf t)\bmod m$ \\
Density & $1/k$ (or $1/\varphi(k)$) & $1/(q-1)$ & $1/m$ (or $1/\varphi(m)$) \\
\bottomrule
\end{tabular}
\end{adjustbox}
\caption{Unified view of \phiMuM (RSA), \polyMuM (AES) and core MuM
under the aggregation-compression invariant.}
\label{tab:comparison}
\end{table}

\section{Conclusion and Future Work}\label{sec:conclusion}
Product-Congruence Games provide a single impartial-game lens through which the 
algebraic “cores” of RSA and AES can be viewed side-by-side.  By compressing a 
left-associated RSA exponent chain to \(\PCG(k,\{1\})\) with \(k=\ord_N(g)\) 
and modelling AES S-box inversion with its polynomial analogue over \GFqstar, 
we have shown that two ciphers usually treated as mathematically disjoint 
share a common combinatorial skeleton.
\smallskip

\subsection*{Main contributions}
\begin{enumerate}
  \item \textbf{Unified RSA/AES class of games.}
        RSA exponent reduction and the AES S-box inversion both fit into one
        \emph{Product-Congruence Game} framework
        (\phiMuM / \polyMuM) driven by a multiplicative invariant.
  \item \textbf{Compression Theorem (\(\mathrm{RSA} \Rightarrow \mathrm{MuM}\)).}
        The left-associated exponent chain compresses to
        \(\PCG(k,\{1\})\) with \(k=\ord_N(g)\); the losing predicate then
        CRT-factorises exactly as in RSA practice.
  \item \textbf{PCG structure: Threshold vs.\ finite Indeterminacy.}
        Every instance splits into \(T_\bullet\) and the finite \(I_\bullet\);
        in either numeric or field settings, positions in \(T_\bullet\) are
        decided by the invariant alone.
  \item \textbf{Single-hole \(\boldsymbol{\Rightarrow}\) SG multiplicativity.}
        Both numeric MuM/\phiMuM and field \polyMuM have the
        single-hole property on \(T_\bullet\), giving \emph{multiplicative}
        Sprague-Grundy values under disjunctive sum.
  \item \textbf{Normalisation (aggregation-compression).}
        Any position in \(I_\bullet\) normalises to one heap (same invariant)
        and then repairs to \(\mathcal P\) in a single legal move; identical
        solver for integers and fields.
  \item \textbf{Collapse boundaries via operation-alignment.}
        A general principle explains why aligned variants collapse to a
        single-heap game, while MuM/\polyMuM avoid total collapse
        yet retain strong local structure in \(T_\bullet\).
  \item \textbf{Exact / asymptotic losing densities.}
        In finite-group PCGs the losing density is exactly \(1/|G|\);
        for unbounded integer MuM it approaches \(1/m\)
        (or \(1/\varphi(m)\) when restricted to units).
\end{enumerate}

\subsection*{Future directions}
\begin{enumerate}
  \item \textbf{Discrete-log / Diffie-Hellman PCGs.}
        Formulate PCGs over DL/DH groups; study Threshold/Indeterminacy and SG
        behaviour in that setting.
  \item \textbf{Elliptic-curve PCGs.}
        Port the framework to the group \((E(\GFq),+)\); investigate whether
        SG-multiplicativity or a finite Indeterminacy region persists.
  \item \textbf{Cryptanalytic angles.}
        Examine whether the Threshold/Indeterminacy split, normalisation, or
        SP give insight into parameter choices or side-channel robustness.
  \item \textbf{Non-abelian / semigroup extensions.}
        Replace \(R^\times\) by non-abelian or merely cancellative semigroups
        and test how far aggregation-compression and SP extend.
  \item \textbf{Composite moduli and CRT SG factorisation.}
        For numeric MuM with composite \(m\), characterise how SG values
        compose across CRT factors and when mixed Thresholds preserve
        multiplicativity.
  \item \textbf{Rule variants.}
        Re-analyse under misère play, scoring rules, or in the thermograph
        framework.
\end{enumerate}

\bigskip
\noindent
These results place two cornerstone cryptographic operations under a single
impartial-game umbrella and open several promising paths for exporting the
technique to further algebraic settings.

\bibliographystyle{plainnat}
\bibliography{references}

\end{document}